   \newcommand{\InSubmitVer}[1]{}%
   \newcommand{\InNotSubmitVer}[1]{#1}%
   \newcommand{\InSubmitVer}[1]{#1}%
   \newcommand{\InNotSubmitVer}[1]{}%
\numberwithin{figure}{section}
\newtheorem{theorem}{Theorem}[section] 
\newtheorem{defn}[theorem]{Definition}
\newtheorem{lemma}[theorem]{Lemma}
\newcommand{\Graph}{\mathsf{G}}%
\newcommand{\MDC}{\mathcal{M}}
\newcommand{\MD}[1]{\MDC\pth{#1}}
\newcommand{\Sq}{\ensuremath{\Box}\xspace}
\newcommand{\SegSet}{\mathsf{S}}%
\newcommand{\pnt}{\ensuremath{\mathsf{p}}\xspace}
\newcommand{\Edges}{\mathsf{E}}%
\newcommand{\EdgesX}[1]{\mathsf{E}\pth{#1}}%
\newcommand{\Vertices}{\mathsf{V}}%
\newcommand{\VerticesX}[1]{\mathsf{V}\pth{#1}}%
\newcommand{\eps}{{\varepsilon}}%
\newcommand{\PntSet}{\mathsf{P}}%
\newcommand{\PolySet}{\mathcal{P}}%
\newcommand{\XSet}{\mathcal{X}}%
\newcommand{\Sample}{\mathcal{S}}%
\newcommand{\PolySetA}{\mathcal{Q}}%
\newcommand{\PolySetB}{\mathcal{B}}%
\newcommand{\poly}{\mathrm{poly}}
\newcommand{\Family}{\EuScript{F}}%
\newcommand{\FamilyH}[1]{\EuScript{F}_{\geq #1}}%
\newcommand{\CDC}{\EuScript{C}}%
\newcommand{\CD}[1]{\EuScript{C}\pth{ #1}}%
\newcommand{\CDH}[2]{\mathcal{C}_{\geq #2}\pth{ #1}}%
\newcommand{\nopt}{n_{\mathrm{opt}}}
\newcommand{\wopt}{W_{\mathrm{opt}}}
\newcommand{\Opt}{\mathcal{O}}
\newcommand{\Local}{\mathcal{L}}
\newcommand{\DefSet}[1]{D\pth{#1}}
\newcommand{\KillSet}[1]{K\pth{#1}}
\newcommand{\Ex}[2][\!]{\mathop{\mathbf{E}}#1\pbrcx{#2}}
\newcommand{\sep}[1]{\,\left|\, {#1} \MakeBig\right.}
\newcommand{\pbrcx}[1]{\left[ {#1} \right]}
\newcommand{\MakeBig}{\rule[-.2cm]{0cm}{0.4cm}}
\newcommand{\cardin}[1]{\left| {#1} \right|}%
\newcommand{\pth}[2][\!]{#1\left({#2}\right)}
\newcommand{\brc}[1]{\left\{ {#1} \right\}}
\newcommand{\figlab}[1]{\label{fig:#1}}
\newcommand{\figref}[1]{Figure~\ref{fig:#1}}
\newcommand{\lemlab}[1]{\label{lemma:#1}}
\newcommand{\lemref}[1]{Lemma~\ref{lemma:#1}}
\newcommand{\seclab}[1]{\label{sec:#1}}
\newcommand{\secref}[1]{Section~\ref{sec:#1}}
\renewcommand{\th}{th\xspace}
\newcommand{\Prob}[1]{\mathop{\mathbf{Pr}}\!\pbrcx{#1}}
\newcommand{\eqlab}[1]{\label{equation:#1}}
\definecolor{blue25}{rgb}{0,0,0.55}%
\newcommand{\emphic}[2]{%
   \textcolor{blue25}{%
      \textbf{\emph{#1}}}%
   \index{#2}} \newcommand{\emphi}[1]{\emphic{#1}{#1}}
\newcommand{\obj}{f}
\newcommand{\ObjSet}{\mathsf{F}}
\newcommand{\ObjSetA}{\mathsf{H}}
\providecommand{\ComplexityClass}[1]{{{\textcolor[named]{OliveGreen}{%
      \textsc{#1}}}}}
\providecommand{\NPComplete}{\ComplexityClass{NP-Complete}\xspace}
\newcommand{\PTAS}{\textsf{PTAS}\xspace}
\newcommand{\QPTAS}{\textsf{QPTAS}\xspace}
\newcommand{\IGraphX}[1]{\Graph_I\pth{#1}}
\renewcommand{\Re}{{\rm I\!\hspace{-0.025em} R}}
\newcommand{\totalI}{\mathsf{t}}
\newcommand{\ArrX}[1]{\mathcal{A}\pth{#1}}
\newcommand{\face}{f}
\newcommand{\cFunc}[1]{u\pth{#1}}
\newcommand{\WeightX}[1]{w\pth{#1}}
\newcommand{\Property}{\Pi}
\newcommand{\PropertyF}[1]{\Property_{ #1}}
\newcommand{\I}{\mathcal{I}}
\newcommand{\thmlab}[1]{{\label{theo:#1}}}
\newcommand{\thmref}[1]{Theorem~\ref{theo:#1}}
\newcommand{\Polygon}{\sigma}
\newcommand{\PolygonA}{\tau}
\newcommand{\SarielThanks}[1]{%
   \thanks{%
      Department of Computer Science; %
      University of Illinois; %
      201 N. Goodwin Avenue; %
      Urbana, IL, 61801, USA; %
      {\tt \si{sariel}\atgen{}\si{uiuc.edu}}; %
      {\tt \url{http://sarielhp.org}.}%
      #1%
   }%
}
\newcommand{\si}[1]{#1}
\newcommand{\atgen}{\symbol{'100}}%
\begin{document}
\InSubmitVer{\linenumbers}

\title{Quasi-Polynomial Time Approximation Scheme for Sparse Subsets
   of Polygons%
   \footnote{Work on this paper was partially supported by NSF AF
      awards CCF-0915984 and CCF-1217462.%
   }%
}

\author{Sariel Har-Peled\SarielThanks{}}%

\date{\today}

\maketitle

\begin{abstract}
    We describe how to approximate, in quasi-polynomial time, the
    largest independent set of polygons, in a given set of polygons.
    Our algorithm works by extending the result of Adamaszek and Wiese
    \cite{aw-asmwi-13, aw-qmwis-14} to polygons of arbitrary
    complexity.  Surprisingly, the algorithm also works for computing
    the largest subset of the given set of polygons that has some
    sparsity condition. For example, we show that one can approximate
    the largest subset of polygons, such that the intersection graph
    of the subset does not contain a cycle of length $4$ (i.e.,
    $K_{2,2}$).
\end{abstract}

\InSubmitVer{%
   \thispagestyle{empty}%
   \newpage
   \setcounter{page}{1}
}


\section{Introduction}

Let $\ObjSet = \brc{\obj_1, \ldots, \obj_n}$ be a set of $n$ objects
in the plane, with weights $w_1,w_2, \ldots, w_n > 0$, respectively.
In this paper, we are interested in the problem of finding an
independent set of maximum weight. Here a set of objects is
\emphi{independent}, if no pair of objects intersect.

A natural approach to this problem is to build an \emphi{intersection
   graph} $\Graph = \IGraphX{\ObjSet} =(V,E)$, where the objects form
the vertices, and two objects are connected by an edge if they
intersect, and weights are associated with the vertices. We want the
maximum independent set in $\Graph$. This is of course an \NPComplete
problem, and it is known that no approximation factor is possible
within $\cardin{ V}^{1-\eps }$ for any $\varepsilon >0$ if
$\textsf{NP}\neq\textsf{ZPP}$ \cite{h-chaw-96}. Surprisingly, even if
the maximum degree of the graph is bounded by $3$, no \PTAS is
possible in this case \cite{bf-apisp-99}.

\paragraph{Fat (convex) objects.}
In geometric settings, better results are possible. If the objects are
fat (e.g., disks and squares), \PTAS{}es are known.  One approach
\cite{c-ptasp-03,ejs-ptasg-05} relies on a hierarchical spatial
subdivision, such as a quadtree, combined with dynamic programming
techniques \cite{a-ptase-98}; it works even in the weighted case.
Another approach \cite{c-ptasp-03} relies on a recursive application
of a nontrivial generalization of the planar separator theorem
\cite{lt-stpg-79, sw-gsta-98}; this approach is limited to the
unweighted case.

\paragraph{Arbitrary objects.}
If the objects are not fat, only weaker results are known. For the
problem of finding a maximum independent set of unweighted
axis-parallel rectangles, an $O( \log \log n)$-approximation algorithm
was given by Chalermsook and Chuzhoy \cite{cc-misr-09}. For the
weighted case of rectangles, Chan and Har-Peled \cite{ch-aamis-12}
provided a $O( \log n / \log \log n)$ approximation. Furthermore, they
provided a \PTAS for independent set of pseudo-disks. Surprisingly,
the algorithm is a simple local search strategy, that relies on using
the planar separator theorem, to argue that if the local solution is
far from the optimal, then there is a ``small'' beneficial exchange.

For line segments, a roughly $O\big( \!\sqrt{\nopt}
\big)$-approximation is known \cite{am-isigc-06}, where $\nopt$ is the
size of the optimal solution. Recently, Fox and
Pach~\cite{fp-cinig-11} have improved the approximation factor to
$n^\eps$ for line segments, and also curves that intersect a constant
number of times.  Their argument relies on the intersection graph
having a large biclique if it is dense, and a cheap separator if the
intersection graph is sparse.

\paragraph{Recent progress.}

Adamaszek and Wiese \cite{aw-asmwi-13, aw-qmwis-14} showed recently a
\QPTAS (i.e., \emph{Quasi-polynomial time approximation
   scheme}\footnote{Not to be confused with queasy-polynomial time.})
for independent set of weighted axis-parallel rectangles;
specifically, for $n$ axis-parallel rectangles and approximation
parameter $\eps > 0$, the algorithm outputs an independent set of
weight $\geq (1-\eps)\wopt$, in $n^{\poly(\log n, 1/\eps)}$ time,
where $\poly(\cdot)$ denotes some constant degree polynomial function,
and $\wopt$ is the weight of the optimal solution. Adamaszek and Wiese
argued that there is always a closed polygonal curve, of complexity
$O(\poly(\log n. 1/\eps))$, that intersects $O(\eps/\log n)$-fraction
of the optimal solution, and partition the optimal solution in a
balanced way. Furthermore, one can easily enumerate over such
polygons.  Now, a recursive divide and conquer algorithm results in a
\QPTAS for the problem.

To prove the existence of this cheap curve, Adamaszek and Wiese
construct a rather involved partition of the plane into regions, such
that each region boundary intersects only a small fraction of the
optimal solution, and then using the deus ex machina (i.e., the planar
separator theorem) it follows that this ``cheap'' curve exists.

More recently, in an upcoming SODA 2014 paper, Adamaszek and Wiese
\cite{aw-asmwi-13, aw-qmwis-14} extended their results to polygons
with polylog number of vertices. Furthermore, as pointed to them by
the author, their approach can be dramatically simplified by using
cuttings \cite{c-chdc-93, bs-ca-95}, and their paper sketches this
alternative approach.

\paragraph{Our results.}
In this paper, we extend Adamaszek and Wiese results to polygons of
arbitrary complexity. Our approximation algorithm is \emph{polynomial}
in the total complexity of the input, and quasi-polynomial in the
number of input polygons.  In detail, we show the following.
\begin{compactenum}[\quad(A)]
    \item \textbf{Canonical decomposition of an arrangement of
       polygons.}

    \begin{minipage}{0.99\linewidth}
        \parpic[r]{\includegraphics{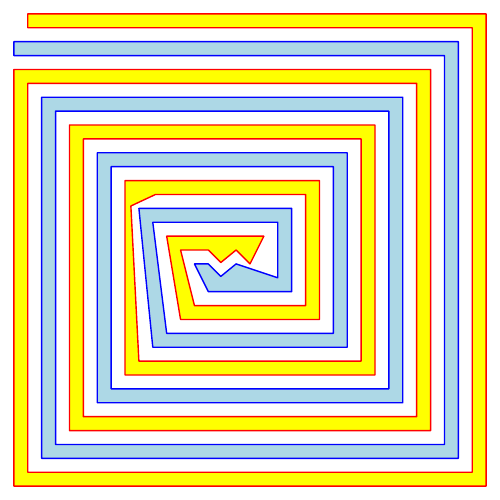}}
    
        We want to apply the Clarkson-Shor technique on a set of
        polygons as described above. To this end, we need to come up
        with a decomposition of the complement the union of disjoint
        polygons, into ``simple'' canonical regions, where each such
        region has a constant size set of polygons that define it. In
        spirit, the idea is similar to computing the vertical
        decomposition of pseudo-disks, except that things are
        significantly more subtle, as we are not assuming that the
        polygons have constant number of extremal points in any
        direction. As such, two polygons might be tangled together in
        such a way that vertical decomposition (or any similar scheme)
        would have unbounded complexity, see figure on the right.
    \end{minipage}
    \medskip

    As such, we need a more topological approach to the task. To this
    end, we use the medial axis of the complement of the union, to
    define the decomposition.

    We also extend this decomposition to the non-disjoint
    case. Specifically, we show that for $t$ polygons with $k$
    intersections of their boundaries, one can decompose the
    arrangement into $O( k +t)$ canonical cells.

    The author is unaware of this decomposition being described in the
    literature before, and it is thus probably new.

    \item \textbf{Clarkson-Shor technique, exponential decay, and
       cuttings.}

    The above decomposition now complies with the requirements of the
    Clarkson-Shor technique \cite{cs-arscg-89}, and we can use it here
    to compute cuttings.  Specifically, we need weighted cuttings, and
    while this is an easy extension of known techniques, this is not
    written explicitly in detail anywhere. As such, for the sake of
    self-containment, we reprove here the weighted version of the
    exponential decay lemma of Chazelle and Friedman
    \cite{cf-dvrsi-90}. Our proofs seems to be (somewhat) simpler than
    previous proofs, and the constants are somewhat better, and as
    such it might be of independent interest.

    This is already sufficient to prove a weak version of
    cuttings. Specifically, we show that given a set of disjoint
    polygons of total weight $W$, and a parameter $r$, one can
    decompose the plane into $O( r \log r)$ canonical cells, such that
    the total weight of the polygons intersecting each canonical cell
    is at most $W/r$. If every input polygon intersects all lines a
    constant number of points, then one can prove the stronger version
    of cuttings, where the number of cells in the cutting is only
    $O(r)$.

    \item \textbf{\QPTAS for independent set of polygons.}

    In \secref {good:separation}, we describe how to use the above
    cutting result to argue that there is always a cheap separating
    curve for the optimal independent set. Our proof works by using
    the planar separator theorem on the cuttings computed above. Our
    proof is significantly simpler than the proof of Adamaszek and
    Wiese \cite{aw-asmwi-13, aw-qmwis-14}, and it uses a significantly
    weaker version of the planar separator theorem.

    In \secref{qptas:algorithm}, we plug our machinery into the
    algorithm of Adamaszek and Wiese \cite{aw-asmwi-13, aw-qmwis-14},
    and get the desired \QPTAS. Our algorithm running time is $O\pth{
       m^{\poly(\log m, 1/\eps)} + m^{O(1)}n }$, where $m$ is the
    number of input polygons, and $n$ is their total complexity.

    \item \textbf{Extensions.}

    For our algorithm to go through, all one needs is that the
    $1/r$-cutting has subquadratic complexity in $r$.  To this end, we
    assume that every pair of input polygons intersects a constant
    number of times (note, that we did not need this assumption in the
    independent set case).  Now, cuttings have sub-quadratic
    complexity if the number of vertices in the original arrangement
    of the optimal subset we want to compute, has subquadratic number
    of vertices. In particular, we get a \QPTAS for the following
    problems.

    \smallskip

    \begin{compactenum}[(i)]
        \item \textbf{Pseudo-disks of bounded depth.} %
        Given a set $\ObjSet$ of weighted pseudo-disks, and a
        parameter $d$ (say a constant), we show that one can compute
        $(1-\eps)$-approximation to the largest subset $\ObjSetA
        \subseteq \ObjSet$ of pseudo-disks, such that no point in the
        plane is covered by more than $d$ regions of $\ObjSetA$.
        
        \item \textbf{Sparse subsets.} %
        Consider a weighted set of polygons $\PolySet$, where we want
        to find the heaviest subset $\PolySetB \subseteq \PolySetB$,
        such that the intersection graph $\IGraphX{\PolySetB}$ does
        not contain the biclique $K_{s,t}$, where $s$ and $t$ are
        constants. The graph $\IGraphX{\PolySetB}$ must be sparse in
        this case, and one can get a \QPTAS to the largest such
        subset. In particular, any condition that guarantees the
        sparsity of $\IGraphX{\PolySetB}$, facilities a \QPTAS for
        finding the largest subset that has the desired property.
    \end{compactenum}

    \smallskip

    In particular, the above implies that the framework of Adamaszek
    and Wiese \cite{aw-asmwi-13, aw-qmwis-14} can be used to
    approximate the largest induced sparse subgraphs of the
    intersection graph of well-behaved geometric regions. Here, the
    type of sparse subgraphs that can be approximated, are ones where
    the sparsity is a hereditary property that holds for any subset of
    vertices (similar in spirit to the independence matroid).
    Surprisingly, such sparse intersection graphs must have only
    linear number of edges, see \secref{extensions} for details.
    
    This is a significant strengthening of the work of Adamaszek and
    Wiese, and the author is unaware of any previous work that
    provides such guarantees (this new problem can be interpreted as a
    packing problem, and there are some results known about geometric
    packing, see \cite{ehr-gpnuc-12} and references therein).
\end{compactenum}

\paragraph{Paper organization.}
In \secref{decompose}, we describe the canonical decomposition of the
complement of the union of $k$ disjoint polygons, and how to extend it
to arbitrary intersecting polygons.  In \secref{cuttings} we reprove
the exponential decay lemma, and show how to build weak $1/r$-cuttings
of disjoint polygons of size $O( r \log r)$, and spell out the
conditions enabling one to compute smaller $1/r$-cuttings of size
$O(r)$. In \secref{qptas}, we sketch the \QPTAS for independent set of
polygons. In \secref{extensions}, we describe how the extension to a
\QPTAS for computing the heaviest sparse subset of polygons.  We
conclude in \secref{conclusions} with some comments.

\section{Decomposing an arrangement of polygons %
   into corridors}
\seclab{decompose}

\subsection{Canonical decomposition for disjoint polygons}

Let $\PolySet = \brc{P_1, \ldots, P_m}$ be a set of $m$ disjoint
simple polygons in the plane, of total complexity $n$.  We also have a
special outside square that contains all the polygons of $\PolySet$,
which we refer to as the \emphi{frame}.  For the sake of simplicity of
exposition we assume that all the edges involved in $\PolySet$ and the
frame are neither horizontal nor vertical (this can be ensured by
slightly rotating the axis system)\footnote{In the example of
   \figref{example} we do not bother to do this, and the frame is axis
   parallel.}.

\begin{figure}[p]
    \centerline{%
       \begin{tabular}{c{c}c}
        \includegraphics[page=1,scale=0.65]{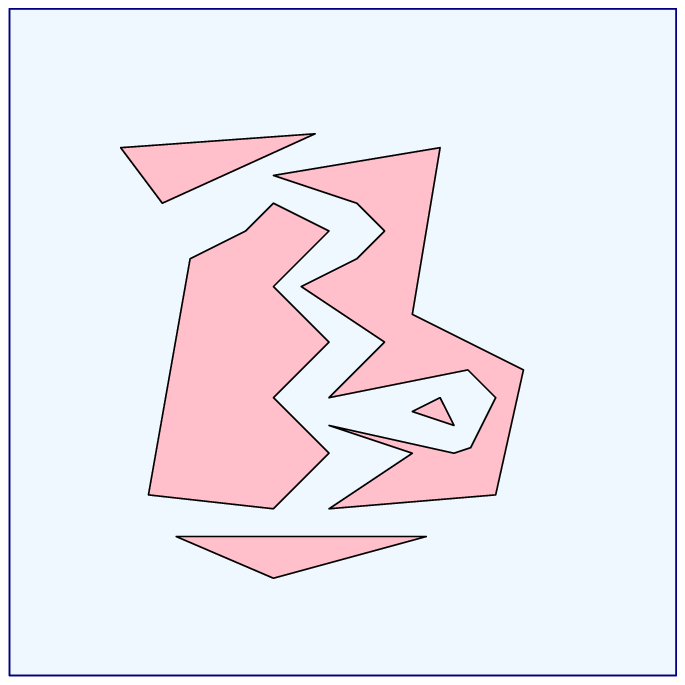} & &
        \includegraphics[page=2,scale=0.65]{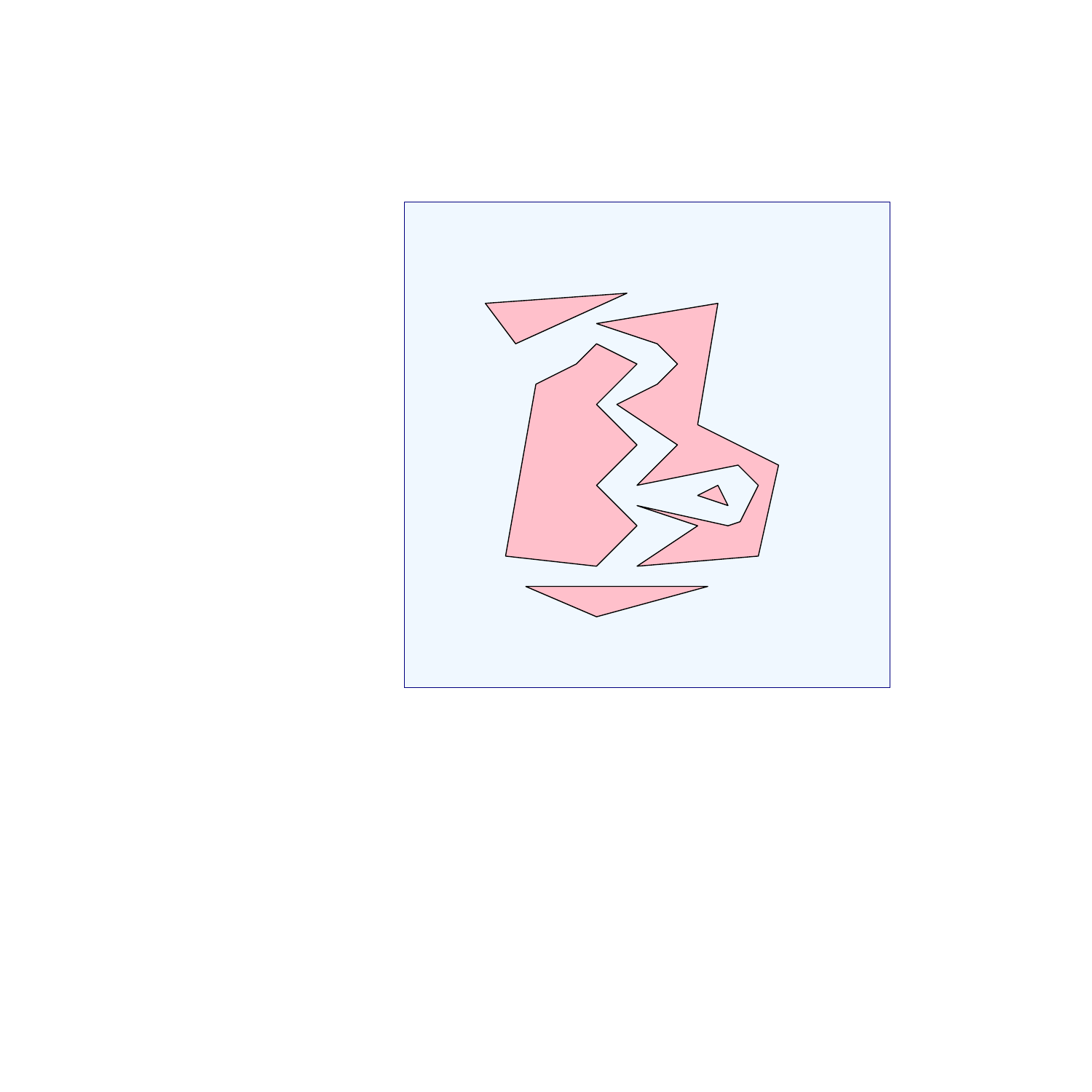}\\
        (A) The polygons of $\PolySet$, and the frame.
        & &
        (B) Some critical squares.\\[0.2cm]
        \includegraphics[page=3]{figs/medial_axis}
        & &%
        \includegraphics[page=4]{figs/medial_axis}\\
        (C) $\MDC$: The $L_\infty$ medial axis. & &%
        (D) $\MDC'$: The reduced $L_\infty$ medial
        axis.\\[0.2cm]
        \includegraphics[page=7]{figs/medial_axis} & &
        \includegraphics[page=8]{figs/medial_axis}\\
        \begin{minipage}{0.3\linewidth}
            (E) The vertices of degree $3$, their critical squares,
            and the spokes they induce.
        \end{minipage}%
        & &%
        \begin{minipage}{0.3\linewidth}
            (F) The resulting corridor decomposition, and some
            corridors.
        \end{minipage}%
    \end{tabular}%
    }

    \caption{Building up the corridor decomposition.}
    \figlab{example}
\end{figure}

We are interested in a canonical decomposition of the complement of
the union of the polygons of $\PolySet$ inside the frame, that has the
property that the numbers of cells is $O(m)$, and every cell is
defined by a constant number of polygons of $\PolySet$. To this end,
consider the medial axis of $\PolySet$. To make the presentation
easier\footnote{Or at least making the drawing of the figures
   easier.}, we will use the $L_\infty$-medial axis $\MDC =
\MD{\PolySet}$. Specifically, a point $\pnt \in \Re^2$ is in $\MDC$ if
there is an $L_\infty$-ball (i.e., an axis-parallel square $\Sq$
centered in $\pnt$) that touches the polygons of $\PolySet$ in two or
more points, and the interior of $\Sq$ does not intersect any of the
polygons of $\PolySet$. We will refer to $\Sq$ as a \emphi{critical
   square}.

The $L_\infty$-medial axis is a connected collection of interior
disjoint segments (i.e., it is the boundary of the Voronoi diagram of
the polygons in $\PolySet$ under the $L_\infty$ metric together with
some extra bridges involving points of the medial axis that have the
same polygon on both sides).  The medial axis $\MDC$ contains some
features that are of no use to us -- specifically, we repeatedly
remove vertices of degree one in $\MDC$ and the segments that supports
them -- this process removes unnecessary tendrils. Let $\MDC'$ be the
resulting structure after this cleanup process.

Let $\Vertices = \VerticesX{\MDC'}$ be the set of vertices of $\MDC'$
of degree at least three. Each such vertex $\pnt \in \Vertices$ has a
critical square $\Sq_\pnt$ associated with it. For such a square
$\Sq_\pnt$, there are $k \geq 2$ input polygons that it touches, and
let $\pnt_1, \ldots, \pnt_k$ be these $k$ points of contact. We refer
to the segments $\pnt \pnt_1, \pnt\pnt_2, \ldots, \pnt \pnt_k$ as the
\emphi{spokes} of $\pnt$. Since no edge of the input polygons, or the
frame is axis parallel, the spokes are uniquely defined.

Let $\SegSet$ be the set of all spokes defined by the vertices of
$\Vertices$. Consider the arrangement formed by the polygons of
$\PolySet$ together with the segments of $\SegSet$. This decomposes
the complement of the union of $\PolySet$ into simple polygons. Each
such polygon boundary is made out of two polygonal chains that lie on
two polygons of $\PolySet$, and four spokes, see \figref{example} for
an example. We refer to such a polygon as a \emphi{corridor}.  

\subsubsection{Corridor decomposition}
\seclab{defining:set}

Let $\CD{\PolySet}$ denote the set of resulting polygons, which is the
\emphi{corridor decomposition} of $\PolySet$. We observe the following
properties:
\begin{compactenum}[\qquad(A)]
    \item Consider a corridor $C \in \CD{\PolySetA}$, for some
    $\PolySetA \subseteq \PolySet$. Then, there exists a subset
    $\PolySetB \subseteq \PolySet$ of size at most $4$ such that $C
    \in \CD{\PolySetB}$. The set $\PolySetB = \DefSet{C}$ is the
    \emphi{defining set} of $C$.

    \item For such a corridor $C$, a polygon $\Polygon \in \PolySet$
    \emphi{conflicts} with $C$, if $C$ is not a corridor of
    $\CD{\DefSet{C} \cup \brc{P}}$. This happens if $\Polygon$
    intersects $C$, or alternatively, the presence of $\Polygon$
    prevents the creation in the medial axis of the two vertices of
    the medial axis defining $C$. The set of polygons in $\PolySet
    \setminus \DefSet{C}$ that conflict with $C$ is the
    \emphi{stopping set} (or \emphi{conflict list}) of $C$.
\end{compactenum}

\begin{lemma}
    \lemlab{corridor:decomp}%
    For a set $\PolySetA$ of $m$ disjoint simply connected polygons in
    the plane, we have that $\cardin{\CD{\PolySetA}} = O\pth{ m }$.
\end{lemma}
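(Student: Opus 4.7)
The plan is to identify the corridors of $\CD{\PolySetA}$ with the edges of the cleaned medial axis $\MDC'$, and then bound the number of such edges using Euler's formula on $\MDC'$.

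I would first establish the identity $|\CD{\PolySetA}| = |E(\MDC')|$ by double counting spokes. At any vertex $v \in \Vertices$, the critical square $\Sq_v$ touches as many input polygons as the degree of $v$ in $\MDC'$, producing that many spokes at $v$; summing over $\Vertices$, the total number of spokes equals $\sum_{v \in \Vertices} \deg_{\MDC'}(v) = 2|E(\MDC')|$. On the other side, the boundary of each corridor consists of exactly four spokes (two at each of its two medial-axis endpoints), and each spoke lies on the boundary of exactly two corridors. Counting spoke-corridor incidences then gives $4 \cdot |\CD{\PolySetA}| = 2 \cdot 2|E(\MDC')|$, so $|\CD{\PolySetA}| = |E(\MDC')|$.

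Next, I would view $\MDC'$ as a connected planar graph on the vertex set $\Vertices$ (suppressing the degree-$2$ points interior to its arcs), whose faces correspond to the $L_\infty$ Voronoi cells of $\PolySetA$ together with the frame: each of the $m$ bounded faces contains exactly one input polygon, and the single unbounded face contains the frame. Hence the face count is $m + 1$. By the cleanup, every vertex of $\MDC'$ has degree at least $3$, so the handshake lemma gives $3|\Vertices| \leq 2|E(\MDC')|$. Euler's formula $|\Vertices| - |E(\MDC')| + (m+1) = 2$ combines with this bound to yield $|E(\MDC')| \leq 3(m-1) = O(m)$, as desired.

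The main obstacle is to rigorously justify the claim that, after the iterative removal of degree-$1$ vertices, $\MDC'$ is a connected planar graph whose bounded faces biject with the input polygons. I would verify that (i) removing a pendant edge from a connected planar graph preserves both connectivity and the face count, since a pendant edge has the same face on both of its sides; and (ii) the edges that survive the cleanup are precisely the medial-axis pieces separating distinct Voronoi cells, so that $\MDC'$ coincides combinatorially with the $L_\infty$ Voronoi diagram of $\PolySetA \cup \brc{\mathrm{frame}}$.
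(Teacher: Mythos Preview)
Your proposal is correct and follows essentially the same route as the paper: interpret the reduced medial axis $\MDC'$ as a connected planar graph on $\Vertices$, observe that it has $m+1$ faces and minimum degree $3$, and apply Euler's formula to get $|E(\MDC')| \leq 3(m-1)$. The paper simply asserts the correspondence ``every corridor corresponds to one edge of $\Graph$'' directly, whereas you justify it via a spoke double-count; this is a cosmetic difference rather than a different idea.
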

\begin{proof}
    Consider the reduced median axis $\MDC'$. It can be naturally be
    interpreted as a connected planar graph, where the vertices of
    degree at least three form the vertex set $\Vertices$, and two
    vertices are connected by an edge if there is a path $\pi$ on
    $\MDC'$ that connects them, and there is no vertex of $\Vertices$
    in the interior of $\pi$. Let $\Graph = (\Vertices, \Edges)$ be
    the resulting graph.

    Observe that the drawing of $\Graph$ has $m+1$ faces, as each face
    contains a single polygon of $\PolySetA$ in its interior (except
    for the outer one, which ``contains'' the frame).  The graph
    $\Graph$ might contain both self loops, and parallel edges.
    However, every vertex of $\Graph$ has degree at least $3$. As
    such, we have that $e \geq 3v/2$, where $v$ and $e$ are the number
    of vertices and edges in $\Graph$, respectively.
    
    Euler's formula in this case states that $m+1-e+v =2$ (the formula
    holds even if the graph contains loops and parallel edges), As
    such we have that $m+1 -(3v/2) +v \geq 2$, which implies that $2m
    + 2 \geq v + 4$; that is $v \leq 2m -2$. This in turn implies that
    $m+1 -e + (2m-2) \geq 2$, which implies that $e \leq 3m-3$. Now,
    clearly, every corridor corresponds to one edge of $\Graph$, which
    implies the claim.
\end{proof}

\subsubsection{Canonical decomposition for intersecting polygons}

Let $\PolySetA = \brc{P_1, \ldots, P_m}$ be a set of $m$ simple
polygons in the plane. We naturally assume that no three boundaries of
the polygons passes through a common point.

For two polygons, we refer to an intersection point of their
boundaries as an \emphi{intersection vertex}.  Next, let $\totalI$ be
the total number of intersection vertices in the arrangement
$\ArrX{\PolySetA}$.  We would like to perform the same kind of
canonical decomposition as above. To this end, consider any face
$\face$ of the arrangement $\ArrX{\PolySetA}$, that has
$\totalI_\face$ intersection vertices on its boundary, and has
$k_\face$ distinct polygons on its boundary. Applying
\lemref{corridor:decomp} to $\face$, where the outer boundary of face
replace the frame, results in a decomposition of this face into
$O\pth{k_\face + \totalI_\face}$ corridors. Repeating this to all the
faces in the arrangement, results in the desired decomposition. We
thus get the following.
 
\begin{lemma}
    \lemlab{corridor:decomp:ext}%
    Let $\PolySetA$ be a set of $m$ simply connected polygons in the
    plane, and let $\totalI$ be the total number of intersection
    vertices in $\ArrX{\PolySetA}$. One can compute a decomposition
    $\CD{\PolySetA}$ of the plane into corridors, such that no polygon
    boundary intersects the interior of a corridor, and each corridor
    is defined by at most four polygons of $\PolySet$. The total
    number of corridors in $\CD{\PolySetA}$ is $O\pth{ m + \totalI }$.
\end{lemma}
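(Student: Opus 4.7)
The plan is to reduce the intersecting case to the disjoint case treated in \lemref{corridor:decomp} by working face-by-face inside the arrangement $\ArrX{\PolySetA}$. I would first observe that every face $\face$ of $\ArrX{\PolySetA}$ is open, its interior is not crossed by any polygon boundary, and its boundary is a union of polygonal chains lying on polygons of $\PolySetA$. These chains are pairwise interior-disjoint, so relative to $\face$ they can be treated exactly as the boundary of a collection of pairwise disjoint simply connected regions: the outer boundary of $\face$ plays the role of the frame, and each inner boundary component (a ``hole'' of $\face$) plays the role of one of the disjoint polygons required as input to \lemref{corridor:decomp}. From the vantage point of $\face$, what lives inside a hole is invisible, so even though the hole may contain further polygons and intersections in the global picture, its boundary is a simple closed polygonal curve, which is all \lemref{corridor:decomp} needs.

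Next, I would apply \lemref{corridor:decomp} inside each face separately. The subtle point is that the chains bounding $\face$ may carry intersection vertices of $\ArrX{\PolySetA}$ as additional degree-$2$ points on the ``polygon'' boundary; however, the proof of \lemref{corridor:decomp} is driven entirely by the $L_\infty$-medial-axis vertices of degree $\geq 3$, and a degree-$2$ intersection vertex on $\partial \face$ creates no new such medial-axis vertex. Hence the Euler-formula argument of \lemref{corridor:decomp} goes through verbatim inside $\face$, producing $O\pth{k_\face + \totalI_\face}$ corridors. The defining-set property carries over without change: each corridor $C$ produced inside $\face$ is still defined by at most four boundary components of $\face$, each contained in a single polygon of $\PolySetA$, so $\cardin{\DefSet{C}} \leq 4$.

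Finally, I would sum the per-face counts. By the standard Euler-type argument applied to the planar subdivision whose vertices are the $\totalI$ intersection vertices of $\ArrX{\PolySetA}$ (together with the original polygon vertices) and whose edges are the resulting pieces of polygon boundary, the arrangement has $O(m + \totalI)$ faces and a total of $O(m + \totalI)$ face-boundary-component incidences, since each intersection vertex is charged to a constant number of faces and each polygon contributes $O(1)$ boundary arcs plus one per intersection along its boundary. Summing $O\pth{k_\face + \totalI_\face}$ over all faces of $\ArrX{\PolySetA}$ therefore yields the advertised total of $O\pth{m + \totalI}$ corridors, completing the decomposition.

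The main obstacle is the bookkeeping in this final summation: one must confirm both that the ``polygons seen by $\face$'' are genuinely simply connected from $\face$'s perspective and that the double-counting across faces really collapses to $O(m + \totalI)$. Everything else is a clean instantiation of \lemref{corridor:decomp}.
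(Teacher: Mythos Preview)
Your proposal is correct and follows essentially the same route as the paper: decompose face-by-face in $\ArrX{\PolySetA}$, apply \lemref{corridor:decomp} inside each face with the outer boundary playing the role of the frame, and then sum the per-face corridor counts by charging each intersection vertex to the $O(1)$ faces it touches---exactly the paper's argument. One small slip: you write that each of the four boundary components defining a corridor is ``contained in a single polygon of $\PolySetA$,'' but an inner boundary component of $\face$ may well pass through intersection vertices and hence be stitched from arcs of several polygons; what makes $\cardin{\DefSet{C}}\le 4$ work is that each \emph{contact point} of the two critical squares (and hence each spoke endpoint and each corridor side locally) lies on a single polygon's arc, not that the whole boundary component does.
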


\begin{proof}
    The decomposition is described above.  As for the total number of
    corridors, observe that every intersection vertex, can contribute
    to at most four faces, and one can also charge the breakup of the
    boundary of polygons passing through this vertex, to the
    vertex. Every vertex get charged $O(1)$ times, there are $\leq
    \totalI = O( \totalI)$ intersection vertices, and the result
    follows.
\end{proof}

\section{Sampling, exponential decay, and %
   cuttings}
\seclab{cuttings}

We next show that one can compute $1/r$-cuttings for disjoint
polygons, and sparse set of polygons. We start by reproving the
exponential decay lemma.

\subsection{Exponential decay}

Let $\PolySet$ be a set of $m$ disjoint polygons in the plane.
\lemref{corridor:decomp:ext} implies that the set of all possible
corridors induced by any subset of $\PolySet$ is of size $O(m^4 )$.

Consider any subset $\Sample \subseteq \PolySet$.  It is easy to
verify that the following two conditions hold (see
\secref{defining:set} for notations): %
\smallskip
\begin{compactenum}[\qquad\qquad(i)]
    \item 
    For any $C \in \CD{\Sample}$, we have $\DefSet{C} \subseteq
    \Sample$ and $\Sample \cap \KillSet{C} = \emptyset$.
    
    \item 
    If $\DefSet{C} \subseteq \Sample$ and $\KillSet{ C }\cap
    \Sample=\emptyset$, then $C \in \CD{\Sample}$.
\end{compactenum}
\smallskip

Namely, the corridor decomposition complies with the Clarkson-Shor
technique \cite{cs-arscg-89} (see also \cite[Chapter 8]{h-gaa-11}). We
prove a standard implication of this technique, for our settings, in a
slightly more general settings.


Consider a set $\PolySet$ of $m$ disjoint polygons, where every
polygon $\Polygon_i \in \PolySet$ has weight $w_i > 0 $, and $W=
\sum_{i=1}^m w_i$. We prove the following version of the exponential
decay lemma -- this proof is an easy extension of the standard proof
(if slightly simpler), and is presented here for the sake of
completeness.

\begin{defn}
    For a target size $\rho$, a \emphi{$\rho$-sample} is a random
    sample $\Sample \subseteq \PolySet$, where polygon $\Polygon_i$ is
    picked with probability $\displaystyle \rho {w_i} / {W}$.
\end{defn}

\begin{defn}
    A monotone increasing function $\cFunc{\cdot} \geq 0$ is
    \emphi{slowly growing}, if for any integer $i > 0$, we have that
    $\cFunc{i n } \leq i^{O(1)}\cFunc{ n }$.
\end{defn}

\begin{lemma}[The weighted exponential decay lemma]%
    \lemlab{exponential:decay}%
    Let $\PolySet = \brc{\Polygon_1,\ldots, \Polygon_m}$ be a set of
    $m$ disjoint weighted polygons in the plane, $\rho \leq m$, and $1
    \leq t \leq \rho/4$ be parameters.  For $i=1,\ldots, m$, let $w_i
    > 0 $ be the weight of the $i$\th polygon $\Polygon_i$. Let $W=
    \sum_i w_i$.  Consider two independent random $\rho$-samples
    $\Sample_1$ and $\Sample_2$ of $\PolySet$, and let $\Sample =
    \Sample_1 \cup \Sample_2$.  A corridor $C$ is \emphi{$t$-heavy} if
    the total weight of the polygons in its conflict list is $\geq t
    W/\rho$. Let $\CDH{\Sample}{t}$ be the set of all $t$-heavy
    corridors of $\CD{\Sample}$.  We have that
    \begin{math}
        \displaystyle \Ex{\MakeBig\! \cardin{ \CDH{\Sample}{t} } } =
        O\pth{ \Big. \rho \exp\pth{-t} }.
    \end{math}    

    More generally, if the polygons are not disjoint, and the corridor
    decomposition of any $m'$ of them has complexity $\cFunc{m'}$,
    where $\cFunc{m'}$ is a slowly growing function, then we have that
    $\Ex{\MakeBig\! \cardin{ \CDH{\Sample}{t} } } = O\pth{ u(\rho)
       \exp\pth{-t} }$.
\end{lemma}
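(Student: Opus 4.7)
The plan is a weighted adaptation of the Chazelle--Friedman proof via a direct ratio comparison. The intuition is that for a $t$-heavy corridor $C$ to appear in $\CD{\Sample}$, its conflict list (of total weight $\geq tW/\rho$) must be avoided by \emph{both} independent samples, which costs an extra multiplicative factor of roughly $\exp(-t)$ relative to the event that $C$ merely lies in $\CD{\Sample_1}$. Summing over all candidate corridors and plugging in a standard first-moment bound on $\cardin{\CD{\Sample_1}}$ then yields the claim.

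Concretely, I would proceed in two steps. First, bound $\Ex{\cardin{\CD{\Sample_1}}}$ via \lemref{corridor:decomp:ext} applied to the random set $\Sample_1$: writing $\cardin{\CD{\Sample_1}} = O(\cFunc{\cardin{\Sample_1}})$ and combining Chernoff concentration of $\cardin{\Sample_1}$ around its mean $\rho$ with the slow-growth hypothesis on $\cFunc{\cdot}$ yields $\Ex{\cardin{\CD{\Sample_1}}} = O(\cFunc{\rho})$, which collapses to $O(\rho)$ in the disjoint case via \lemref{corridor:decomp}. Second, for any candidate corridor $C$ with $\DefSet{C}$ of size $d \leq 4$, conflict list $\KillSet{C}$, and $p = \rho/W$, the Clarkson--Shor characterization from \secref{defining:set} together with independence of the Bernoullis gives
\begin{align*}
   \Prob{C \in \CD{\Sample_1}} &= \prod_{i \in \DefSet{C}} p w_i \prod_{j \in \KillSet{C}} (1 - p w_j),\\
   \Prob{C \in \CD{\Sample}}   &= \prod_{i \in \DefSet{C}} \bigl[1-(1-p w_i)^2\bigr] \prod_{j \in \KillSet{C}} (1-p w_j)^2.
\end{align*}
Applying $1-(1-q)^2 \leq 2q$ on the defining factors and $\prod_j(1-p w_j) \leq \exp\!\bigl(-p \sum_j w_j\bigr) \leq \exp(-t)$ on the conflict factors (using $t$-heaviness, which gives $\sum_{j \in \KillSet{C}} w_j \geq tW/\rho$), the ratio $\Prob{C \in \CD{\Sample}}/\Prob{C \in \CD{\Sample_1}}$ is at most $2^d \exp(-t) \leq 16\exp(-t)$. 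Summing over all $t$-heavy candidates and invoking linearity of expectation gives
$$
   \Ex{\cardin{\CDH{\Sample}{t}}} \;\leq\; 16 \exp(-t) \sum_C \Prob{C \in \CD{\Sample_1}} \;=\; 16 \exp(-t)\, \Ex{\cardin{\CD{\Sample_1}}},
$$
which is $O(\cFunc{\rho}\exp(-t))$ in general and $O(\rho\exp(-t))$ in the disjoint case.

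The main technical wrinkle is the edge case $p w_i > 1$ for a particularly heavy polygon: the formal sampling rule should be $\min(p w_i, 1)$, but any such polygon is deterministically included in both samples, so it effectively behaves as a fixed element of $\Sample$ and can be removed from the analysis without affecting the ratio bound. A secondary bookkeeping point is verifying $\Ex{\cFunc{\cardin{\Sample_1}}} = O(\cFunc{\rho})$ by splitting the expectation over geometric buckets $\cardin{\Sample_1} \in [i\rho, (i+1)\rho]$ and checking that the slow-growth polynomial factor from $\cFunc{\cdot}$ is dominated by the exponential Chernoff tail; the restriction $t \leq \rho/4$ is exactly what keeps these tail sums bounded by the target estimate.
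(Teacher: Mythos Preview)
Your proposal is correct and follows essentially the same approach as the paper: both establish the key inequality $\Prob{C \in \CD{\Sample}} \leq 16\, e^{-t}\, \Prob{C \in \CD{\Sample_1}}$ for every $t$-heavy corridor, sum over all candidate corridors, and bound $\Ex{\cardin{\CD{\Sample_1}}}$ by $O(\cFunc{\rho})$ via Chernoff plus the slow-growth hypothesis. The only cosmetic difference is that the paper reaches the inequality through two conditional-probability steps (showing $\Prob{C \in \CD{\Sample_1} \mid C \in \CD{\Sample}} \geq 1/16$ and $\Prob{C \in \CD{\Sample} \mid C \in \CD{\Sample_1}} \leq e^{-t}$), whereas you write out both unconditional probabilities as Bernoulli products and bound their ratio directly; these are the same double-sampling computation unpacked in two different orders. (One minor quibble: the hypothesis $t \leq \rho/4$ is not what makes the Chernoff tail sum for $\Ex{\cFunc{\cardin{\Sample_1}}}$ converge---that sum is bounded for any $t$---so your closing remark about its role is off, though this does not affect the argument.)
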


\begin{proof}
    The basic argument is to use double sampling. Intuitively (but
    outrageously wrongly), a heavy corridor of $\CD{\Sample}$ has
    constant probability to be presented in $\CD{\Sample_1}$, but then
    it has exponentially small probability (i.e., $e^{-t}$) of not
    being ``killed'' by a conflicting polygon present in the second
    sample $\Sample_2$.

    For a polygon $Q \in \PolySet$, we have that $\Prob{ Q \in
       \Sample_1 \sep{ Q \in \Sample}} =\Prob{ Q \in \Sample_2 \sep{ Q
          \in \Sample}} \geq 1/2$. Now, consider a corridor $C \in
    \CD{\Sample}$, and let $Q_1, \ldots, Q_4 \in \PolySet$ be its
    defining set. Clearly, we have that 
    \begin{align*}
        \nu &= \Prob{C \in \CD{\Sample_1} \sep{ C \in \CD{\Sample}}}%
        =%
        \Prob{Q_1, Q_2, Q_3, Q_4 \in \Sample_1 \sep{ C \in
              \CD{\Sample}}}%
        \\%
        &=%
        \prod_{i=1}^4 \Prob{Q_i \in \Sample_1 \sep{ C \in
              \CD{\Sample}}}
        = \prod_{i=1}^4 \Prob{Q_i \in {\Sample_1} \sep{ Q_1, \ldots,
              Q_4 \in
              \Sample}}\\
        &%
        = \prod_{i=1}^4 \Prob{Q_i \in {\Sample_1} \sep{ Q_i \in
              \Sample}} \geq \frac{1}{16}.
    \end{align*}
    This in turn implies that
    \begin{align}
        16 \Prob{ \Big. \big(C \in \CD{\Sample_1} \!\big)\, \cap\,
           \big( C \in \CD{\Sample} \!\big) }%
        \geq %
        \Prob{ \MakeBig C \in \CD{\Sample}}.%
        \eqlab{one}
    \end{align}

    Next, consider a corridor $C \in \CD{\Sample_1}$ that is
    $t$-heavy, with, say, $\brc{P_1, \ldots, P_k}$ being its conflict
    list.  Clearly, the probability that $\Sample_2$ fails to pick one
    of the conflicting polygons in $\Sample_2$, is bounded by
    \begin{align*}
        \Prob{ \MakeBig C \in \CD{\Sample} \sep{ C \in
              \CD{\Sample_1}}}%
        &=%
        \Prob{ \MakeBig \forall i \in \brc{1,\ldots, k} \quad P_i
           \notin \Sample_2}%
        =%
        \prod_{i=1}^k \pth{1 - \rho\frac{w_i}{W}}%
        \\%
        &\leq %
        \prod_{i=1}^k \exp \pth{ - \rho \frac{w_i}{ W}} %
        =%
        \exp \pth{ - \frac{\rho}{W} \sum_i w_i } %
        \\
        &\leq%
        \exp\pth{ - \frac{\rho}{W} \cdot t \frac{W}{\rho} } =
        e^{-t}. %
    \end{align*}
    
    Let $\Family$ be the set of possible corridors, and let $
    \FamilyH{t}\subseteq \Family$ be the set of all $t$-heavy
    corridors.  We have that
    \begin{align*}
        \Ex{\MakeBig\! \cardin{ \CDH{\Sample}{t} } }%
        &=%
        \sum_{C \in \FamilyH{t}} \Prob{\MakeBig C \in \CD{\Sample}}
        \leq%
        \sum_{C \in \FamilyH{t}} 16 \Prob{ \MakeBig \pth{C \in
              \CD{\Sample_1}} \cap \pth[]{ C \in \CD{\Sample}}}\\%
        &\leq%
        16\sum_{C \in \FamilyH{t}} \underbrace{ \Prob{ \MakeBig C \in
              \CD{\Sample} \sep{ C \in \CD{\Sample_1}}}}_{\leq e^{-t}}
        \Prob{ C \in \CD{\Sample_1}}
        %
        \leq 16 e^{-t} \sum_{C \in \FamilyH{t}} \Prob{ \MakeBig C \in
           \CD{\Sample_1}}%
        \\%
        &%
        \leq%
        16 e^{-t} \sum_{C \in \Family} \Prob{ \MakeBig C \in
           \CD{\Sample_1}}%
        = %
        16 e^{-t} \Ex{\MakeBig\! \cardin{\CD{\Sample_1}}}%
        = %
        16 e^{-t} \Ex{ \MakeBig O\pth{ \cardin{\Sample_1}}} = %
        O\pth{ e^{-t} \rho },
    \end{align*}
    by \lemref{corridor:decomp}, and since $\Ex{ \cardin{\Sample_1}} =
    \rho$.

    As for the second claim, by the Chernoff inequality, and since
    $\cFunc{\cdot}$ is slowly growing, there are constants $c$ and
    $c'$, such that
    \begin{align*}
        \Ex{\MakeBig\!  \cardin{\CD{\Sample_1}}}%
        \leq%
        \cFunc{\rho} + \sum_{i=1}^{\infty} \Prob{ \Big.\! \cardin{
              \Sample_1} \geq i \rho} \cFunc{\Big. (i+1)\rho}%
        \leq%
        \cFunc{\rho} + \sum_{i=1}^{\infty} 2^{-i} c (i+1)^{c'} \cFunc{
           \rho}%
        =%
        O( \cFunc{\rho} ).
    \end{align*}
\end{proof}

Our proof of the exponential decay lemma is inspired by the work of
Sharir \cite{s-cstre-01}. The resulting computations seems somewhat
easier than the standard argumentation.
    %

\subsubsection{Weak Cuttings}

For a set $\PolySet$ of polygons of total weight $W$, a
\emphi{$1/r$-cutting}  is a decomposition $\CDC$ of
the plane into regions, such that
\begin{compactenum}[\quad (A)]
    \item each region is ``simple'',
    \item the total number of regions in $\CDC$ is small (as a
    function of $r$), and
    \item for a region $C \in \CDC$, the total weight of the polygons
    of $\PolySetA$, such that their boundary intersects the interior
    of $C$ is at most $W /r$.
\end{compactenum}
See \cite{cf-dvrsi-90, bs-ca-95, h-cctp-00} and references therein for
more information about cuttings.

\begin{lemma}%
    \lemlab{weak:cutting}%
    Let $\PolySet$ be a set of weighted polygons of total weight $W$,
    not necessarily disjoint, such that for any subset $\Sample
    \subseteq \PolySet$, the complexity of $\CD{\Sample}$ is
    $\cFunc{\cardin{\Sample}}$, and $\cFunc{\cdot}$ is a slowly
    growing function. Then there exists $1/r$-cutting with $
    O\pth{\big.\!\, \cFunc{r \log r} }$ corridors. Furthermore, this
    cutting can be computed efficiently.
\end{lemma}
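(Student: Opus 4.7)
The plan is to apply \lemref{exponential:decay} to a suitably large weighted random sample. Let $\rho = c r \ln r$ for a constant $c$ to be chosen large enough, and draw a $\rho$-sample $\Sample \subseteq \PolySet$ as in that lemma (so polygon $\Polygon_i$ is picked independently with probability $\rho w_i / W$). I would then output $\CD{\Sample}$ and show that, with positive probability, it is already a $1/r$-cutting of the stated size.

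The cutting condition is an exact match for the notion of $t$-heaviness: a corridor $C \in \CD{\Sample}$ violates it iff its conflict list has weight exceeding $W/r$, which with the choice $t = \rho/r = c \ln r$ is precisely the condition that $C$ is $t$-heavy. The second claim of \lemref{exponential:decay} then gives
\[
   \Ex{\MakeBig\! \cardin{\CDH{\Sample}{t}}}
   =
   O\pth{\Big.\! \cFunc{\rho} \, e^{-t}}
   =
   O\pth{\Big.\! \cFunc{c r \ln r} \cdot r^{-c}}.
\]
Since $\cFunc{\cdot}$ is slowly growing, $\cFunc{c r \ln r} = (\log r)^{O(1)} \cFunc{r}$, so choosing $c$ sufficiently large forces this expectation below $1/4$, and hence with probability at least $3/4$ every corridor of $\CD{\Sample}$ has conflict weight at most $W/r$.

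The size bound $\cardin{\CD{\Sample}} = O\pth{\cFunc{r \ln r}}$ follows from the intermediate estimate $\Ex{\cardin{\CD{\Sample}}} = O\pth{\cFunc{\rho}}$ established inside the proof of \lemref{exponential:decay}, combined with Markov's inequality and slow growth (which absorbs the constant factor $c$). A union bound over the two good events produces a sample whose corridor decomposition is the desired cutting, proving existence. For the efficient-construction clause, the same randomized procedure (sample, build $\CD{\Sample}$ together with its conflict lists, verify both properties, retry on failure) succeeds in $O(1)$ expected rounds, yielding a Las Vegas algorithm. The main obstacle I anticipate is calibrating the constant $c$ against the hidden exponent of the slow-growth property of $\cFunc{\cdot}$ so that the heavy-corridor count and the overall size of $\CD{\Sample}$ simultaneously meet their targets; this is a bookkeeping issue rather than a genuine difficulty.
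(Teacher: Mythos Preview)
Your proposal is correct and follows essentially the same approach as the paper: oversample by a $\log r$ factor, invoke \lemref{exponential:decay} with $t = c\ln r$, and conclude via Markov that no heavy corridor survives. The only cosmetic slip is that \lemref{exponential:decay} as stated is for $\Sample = \Sample_1 \cup \Sample_2$ (two independent $\rho$-samples), not a single $\rho$-sample, but this is absorbed in the constants; your treatment of the size bound is in fact slightly more explicit than the paper's.
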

\begin{proof}
    Let $\Sample_1$ and $\Sample_2$, be $\rho$-samples for $\rho = c r
    \ln r$, where $c$ is a sufficiently large constant, and let
    $\Sample = \Sample_1 \cup \Sample_2$.  We have that for any
    corridor $C \in \CD{\Sample}$, the total weight of the conflict
    list of $C$ is of size $\leq W /r$. This holds with probability
    $\geq 1 - 1/r^{O(1)}$, by \lemref{exponential:decay}.

    Indeed, since $\cFunc{\cdot}$ is slowly growing, it must be that
    $\cFunc{i} = i^{O(1)}$. Now, a corridor of $\CD{\Sample}$ that has
    conflict-list weight $\geq W/r$, is $t$-heavy for $t = c \ln
    r$. The number of such corridors by \lemref{exponential:decay} is
    in expectation
    \begin{align*}
        \Ex{\MakeBig\! \cardin{ \CDH{\Sample}{t} } }%
        =%
        O\pth{ \big. \cFunc{\rho} \exp\pth{-t} }%
        =%
        O\pth{ \rho^{O(1)} \exp\pth{-t} }%
        <%
        \frac{1}{r^{O(1)}},
    \end{align*}
    for $c$ sufficiently large. Now, the claim follows from Markov's
    inequality, as $\Prob{ \MakeBig\! \cardin{ \CDH{\Sample}{t} } \geq
       1 } \leq \Ex{ \MakeBig\! \cardin{ \CDH{\Sample}{t} } } \leq 1/r
    ^{O(1)}$.
\end{proof}

\subsubsection{Smaller cuttings}

Getting $1/r$-cuttings of size $O(r)$ (for disjoint polygons), where
every cell in the cutting is ``nice'' is somewhat more
challenging. However, for our purposes, any $1/r$-cutting of size
$O(r^c)$ (where $c<2$ is a constant) is sufficient. Nevertheless, one
way to get the smaller cuttings, is by restricting the kind of
polygons under consideration.

\begin{lemma}%
    \lemlab{smaller:cutting}%
    Let $\PolySet$ be a set of weighted polygons of total weight $W$,
    not necessarily disjoint, such that for any subset $\Sample
    \subseteq \PolySet$, the complexity of $\CD{\Sample}$ is
    $\cFunc{\cardin{\Sample}}$, and $\cFunc{\cdot}$ is a slowly
    growing function. In addition, assume that every polygon in
    $\PolySet$ have $O(1)$ intersection points with any line, and the
    boundaries of every pair of polygons of $\PolySet$ have a constant
    number of intersections.

    Then, there exists $1/r$-cutting of $\PolySet$ into $ O\pth{
       \cFunc{r } }$ regions, where every region is the intersection
    of two corridors. Furthermore, this cutting can be computed
    efficiently.
\end{lemma}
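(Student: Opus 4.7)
The plan is to mimic the classical Chazelle--Friedman two-level construction, using the weighted exponential decay lemma (\lemref{exponential:decay}) to eliminate the extra $\log r$ factor appearing in \lemref{weak:cutting}. Concretely, I first take a single random $\rho$-sample $\Sample_1 \subseteq \PolySet$ with $\rho = r$ (not $r\log r$) and form $\CD{\Sample_1}$; by \lemref{corridor:decomp:ext} together with the slowly-growing assumption on $\cFunc{\cdot}$, this outer decomposition has $O\pth{\cFunc{r}}$ corridors in expectation. For a corridor $C \in \CD{\Sample_1}$, define its \emph{heaviness} $t_C = r\cdot W_C / W$, where $W_C$ is the total weight of its conflict list $\KillSet{C}$. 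Corridors with $t_C \le 1$ are already fine and need no refinement.

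For corridors with $t_C > 1$, I refine locally. Inside such a $C$, take a secondary $\Theta(t_C)$-sample $\Sample_C$ from the conflict list $\KillSet{C}$, and form the corridor decomposition of $\Sample_C$ restricted to $C$. Here is where the extra hypotheses of the lemma are used: because every polygon of $\PolySet$ meets any line in $O(1)$ points and because pairs of polygon boundaries cross $O(1)$ times, the pieces of polygons of $\Sample_C$ that lie inside $C$ behave like well-behaved pseudo-segments, so the local corridor decomposition is well-defined, has complexity $O\pth{\cFunc{t_C}}$, and each of its cells is, by construction, the intersection of $C$ with a corridor of $\CD{\Sample_C}$. That immediately gives the ``intersection of two corridors'' form required by the statement. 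A second application of \lemref{exponential:decay} internally to $C$ shows that with high probability every resulting sub-cell has conflict weight $\le W/r$, so the output is genuinely a $1/r$-cutting.

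To bound the total number of cells, I combine the two applications of exponential decay. By \lemref{exponential:decay}, the expected number of corridors of $\CD{\Sample_1}$ with heaviness at least $t$ is $O\pth{\cFunc{r}\, e^{-t}}$. Summing the refinement cost yields
\begin{align*}
    \Ex{\text{\# cells}}
    \;\le\;
    \sum_{t\ge 1} O\pth{\cFunc{r}\, e^{-t}} \cdot O\pth{\cFunc{t}}
    \;=\;
    O\pth{\cFunc{r}} \sum_{t\ge 1} \cFunc{t}\, e^{-t}
    \;=\;
    O\pth{\cFunc{r}},
\end{align*}
where convergence uses that $\cFunc{\cdot}$ is slowly growing, hence polynomially bounded ($\cFunc{t} = t^{O(1)}$), so the series converges. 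Efficient computation follows in the standard way: one can sample, check the heaviness of each corridor, and, if any sub-cell fails to be light enough, repeat the local refinement (a Las Vegas construction that succeeds in $O(1)$ expected tries by Markov).

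The main obstacle I anticipate is making the local refinement rigorous: inside a corridor $C$, the conflict polygons are only partially contained, so one must extend the corridor-decomposition machinery of \secref{decompose} to a confined region bounded by polygon pieces and spokes, working with curves (partial boundaries) rather than full simple polygons. The hypotheses that every polygon crosses any line $O(1)$ times and that pairs of polygons cross $O(1)$ times are exactly what guarantees that $\partial C$ is crossed $O(1)$ times by each conflicting polygon, so the reduced medial-axis construction of \secref{decompose}, applied with $\partial C$ playing the role of the frame, still yields $O\pth{\cFunc{|\Sample_C|}}$ corridors; once this is verified, the rest of the argument is just bookkeeping on the exponential decay.
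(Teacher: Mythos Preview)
Your approach is the same two-level Chazelle--Friedman refinement the paper sketches (sample of size $r$, then locally fix up heavy corridors, clipping second-level corridors to their parent), and your use of the line-crossing and pairwise-crossing hypotheses to control clipping is exactly what the paper has in mind.

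One small gap: with a secondary $\Theta(t_C)$-sample, a single invocation of \lemref{exponential:decay} does \emph{not} make every sub-cell light with high probability---a sub-cell of conflict weight $>W/r$ is only $1$-heavy in the local problem, and exponential decay gives an expected $\Theta(\cFunc{t_C})$ such cells, not $o(1)$. The paper sidesteps this by invoking \lemref{weak:cutting} for the second level, i.e.\ a $\Theta(t_C\log t_C)$-sample inside $C$; lightness is then guaranteed, and the refinement cost becomes $O\pth{\cFunc{t_C\log t_C}}$. Since $\cFunc{\cdot}$ is slowly growing (hence polynomially bounded), your geometric sum $\sum_{t\ge 1}\cFunc{r}\,e^{-t}\cdot\cFunc{t\log t}$ still converges to $O\pth{\cFunc{r}}$, so the fix costs nothing.
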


\begin{proof}
    This follows by the standard two level sampling used in the
    regular cutting construction. Specifically, we first take a sample
    of size $r$, and then we fix-up any corridor that its
    conflict-list is too heavy by doing a second level sampling, using
    \lemref{weak:cutting}. In the resulting decomposition, we have to
    clip every corridor generated in the second level, to its parent
    corridor. The assumption about every polygon intersecting a line
    constant number of times implies the desired bound. 

    We omit the easy details -- see \si{de} Berg and and Schwarzkopf
    \cite{bs-ca-95} and Chazelle and Friedman \cite{cf-dvrsi-90}.
\end{proof}

\section{\QPTAS for independent set}
\seclab{qptas}

\subsection{Structural lemma about good separating polygon}
\seclab{good:separation}

\begin{lemma}
    \lemlab{good:polygon}%
    Consider a weighted set $\PolySet$ of $m$ polygons, of total
    complexity $n$, not necessarily disjoint, and let $\Opt$ be the
    heaviest independent set of polygons in $\PolySet$, where $\nopt =
    \cardin{\Opt}$ and $\wopt = \WeightX{\Opt} = \sum_{Q \in \PolySet}
    \WeightX{Q}$. And let $r$ be a parameter. Then there exists a
    polygon $\Polygon$, such that:
    \begin{compactenum}[\quad (A)]
        \item The total weight of the  polygons of
        $\Opt$ completely inside (resp. outside) it is at least a
        constant fraction of $\wopt$.

        \item The total weight of the polygons of $\Opt$ that
        intersect the boundary of $\Polygon$ is $\displaystyle O\pth{ \sqrt{
              \frac{ \log r}{ r} } \wopt }$.

        \item The polygon $\Polygon$ can be fully encoded by a binary string
        having $O\pth{ \sqrt{r \log r} \log m }$ bits.
    \end{compactenum}
\end{lemma}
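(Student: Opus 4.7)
The plan is to apply the cutting machinery of \secref{cuttings} to the optimal solution $\Opt$ itself, and then invoke a weighted planar separator theorem on the resulting planar adjacency structure of corridors. Since $\Opt$ is an independent set, its polygons are pairwise disjoint, so \lemref{weak:cutting} applied to $\Opt$ with parameter $r$ yields a corridor decomposition $\CDC$ of size $N = O(r \log r)$ such that, for every $C \in \CDC$, the total $\Opt$-weight of polygons whose boundary intersects the interior of $C$ is at most $\wopt/r$. The lemma is only an existence statement, so it is legitimate to use $\Opt$ here even though the algorithm of \secref{qptas} does not know $\Opt$ in advance.

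Next, I would assign each $Q \in \Opt$ to a canonical corridor $\kappa(Q) \in \CDC$ containing a distinguished point of $Q$ (say, its leftmost vertex; sample polygons of the cutting, being on corridor boundaries, can be charged to any adjacent corridor). Form the planar adjacency graph $\Graph$ of $\CDC$ (vertices are corridors, edges join corridors sharing a boundary), weight each vertex $v_C$ by the total weight of polygons with $\kappa(Q) = C$, so that the total vertex weight equals $\wopt$, and apply the weighted planar separator theorem (the cycle-separator variant, so that the output is a single simple closed curve in the plane). This yields a separator $S \subseteq \VerticesX{\Graph}$ of size $\cardin{S} = O\pth{\sqrt{N}} = O\pth{\sqrt{r \log r}}$ whose removal splits $\Graph$ into two sides $A$ and $B$ of vertex-weight each at most $(2/3)\wopt$. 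Define $\Polygon$ to be the closed polygonal boundary of $\bigcup_{v_C \in A} C$.

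The three properties then follow by routine bookkeeping. For (B), any polygon of $\Opt$ crossing $\Polygon$ must cross the boundary of some corridor in $S$ and hence lie in that corridor's conflict list; summing the cutting's conflict-weight bound across $S$ gives total crossing weight at most $\cardin{S} \cdot \wopt/r = O\pth{\sqrt{\log r / r}} \wopt$. For (A), polygons with $\kappa(Q) \in A$ that do not cross $\Polygon$ lie strictly inside it (and symmetrically on the other side), so each side carries weight at least $\WeightX{A}$ (respectively $\WeightX{B}$) minus the small crossing weight from (B), which is a constant fraction of $\wopt$. For (C), $\Polygon$ is determined by the list of corridors in $S$ together with one bit per connected component of $\Graph \setminus S$ indicating ``inside'' or ``outside''; each corridor in $\CDC$ is encoded by its defining set $\DefSet{C}$ of at most four polygons (\lemref{corridor:decomp:ext}), and each polygon of $\PolySet$ by its $O(\log m)$-bit index, giving a total encoding of $O\pth{\sqrt{r \log r} \log m}$ bits.

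The main obstacle is the possibility that a single corridor or a single polygon of $\Opt$ carries a non-constant fraction of $\wopt$, which would prevent the separator from producing a two-sided constant-fraction split, since then $\WeightX{S}$ could swallow too much mass. This is the standard pathology and is handled by a small case analysis: if some single polygon $Q \in \Opt$ carries (say) more than half of $\wopt$, its axis-aligned bounding box gives a cheaply encodable $\Polygon$ already satisfying (A)--(C); and if some single corridor accumulates too much weight, one applies the argument recursively to the sub-arrangement inside that corridor, whose boundary is itself cheap to encode. In either sub-case the resulting $\Polygon$ meets or improves upon the promised bounds.
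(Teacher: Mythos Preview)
Your proposal is essentially the paper's proof: a $1/r$-cutting of $\Opt$ via \lemref{weak:cutting}, then a weighted planar cycle separator on the resulting $O(r\log r)$-face planar map, with $\Polygon$ read off from the separator. The only cosmetic difference is that the paper stays in the primal (corridors as faces, the separator a cycle $\sigma$ of spokes and polygon subchains, with $\Polygon=\sigma$ encoded edge-by-edge) whereas you pass to the dual (corridors as vertices, $S$ a vertex separator, $\Polygon=\partial\bigcup_{C\in A}C$ encoded by listing $S$).

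One small correction to your closing paragraph: the heavy-corridor case analysis is unnecessary, because any polygon whose leftmost vertex lies in the interior of a corridor $C$ is in $\KillSet{C}$, so every corridor carries assigned weight at most $\wopt/r$ and no face dominates — this is why the paper invokes the weighted cycle separator without further ado. And in the residual heavy-polygon subcase, the right choice of $\Polygon$ is $\partial Q$ itself (which, by independence of $\Opt$, intersects no other polygon of $\Opt$), not its axis-aligned bounding box, which can intersect arbitrarily many polygons of $\Opt$ and violate (B).
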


\begin{proof}
    We compute an $1/r$-cutting of $\Opt$ using \lemref{weak:cutting},
    which results in a decomposition with $\rho = O( r \log r)$
    corridors. We now interpret this cutting as a planar map, with
    $O(\rho)$ faces. As such, by the planar separator theorem, it has
    a cycle separator of size $O(\sqrt{\rho})$.
    
    We need to be slightly more careful.  We assign every polygon of
    $\Opt$ to the corridor in the cutting that contains its leftmost
    vertex.  Thus, the weight of a corridor of the cutting is the
    total weight of the polygons that get assigned to it (notice, that
    a polygon of $\Opt$ might intersect several corridors, but it is
    assigned only to one of them).

    The cycle separator we need is a partition of the planar graph
    that balances out these weights of the faces. The existence of
    such a cycle separator follows readily by known results
    \cite{m-fsscs-86}%
    \footnote{For the interested reader here is a quick sketch --
       consider the dual graph, and observe that under general
       position assumptions it is triangulated, and as such it has the
       desired cycle separator. Interpreting this cycle, as a cycle of
       faces in the original graph, and converting it into a cycle of
       edges yields the desired separator, by plugging it into a
       ``standard'' weighted cycle separator result, see for example
       \cite{h-speps-11} for a recent proof.}.

    The separating cycle $\sigma$ has $O\pth{ \sqrt{\rho} }$ edges in
    the planar graph. Here, an edge is either a spoke or a subchain of
    one of the polygons of $\PolySet$. Now, the total weight of
    polygons of $\PolySet$ that intersect a spoke%
    \footnote{Note, that by disjointness of the polygons of $\Opt$, no
       polygonal chain can intersect any polygons, at least for the
       case of independent set.} %
    used in the $1/r$-cutting can be at most $\wopt /r$, it follows
    that the total weight of polygons in $\Opt$ intersecting $\sigma$
    is
    \begin{align*}
        O\pth{ \sqrt{\rho} \frac{\wopt}{r} }%
        =%
        O\pth{ \sqrt{r \log r} \cdot \frac{\wopt}{r} }%
        =%
        O\pth{ \wopt \sqrt{\frac{ \log r}{r}} }.
    \end{align*}

    We next show how to encode each edge of $\sigma$ using $O( \log
    m)$ bits, which implies the claim.  Compute the $O\pth{m^4}$
    possible corridors induced by any subset of polygons of $\PolySet$
    that do not intersect. Let $\XSet$ be this set of polygons. Every
    corridor of $\XSet$ induces $\leq 4$ vertices where its spokes
    touch its two adjacent polygons. In particular, let $\PntSet$ be
    the set of all such vertices. Clearly, there are $O\pth{m^4}$ such
    vertices.

    Consider an edge $e$ of $\sigma$. If it is a spoke we can encode
    it by specifying which spoke it is, which requires $O\pth{ \log
       m^4}$ bits, since there are $O\pth{m^4}$ possible
    spokes. Otherwise, the edge is a subchain of one of the polygons
    of $\PolySet$. We specify which one of the polygons it is on,
    which requires $O(\log m)$ bits, and then we need to specify the
    start and end vertex of the subchain, which can be done by
    specifying the two relevant vertices of $\PntSet$, using $O\pth{
       \log \cardin{\PntSet}} = O( \log m)$ bits. We also need to
    specify which one of the two possible polygonal subchains we refer
    to, which requires an extra bit. Overall, the number of bits
    needed to encode $\sigma$ is $O\pth{\sqrt{\rho} \log m}$, as
    claimed.
\end{proof}

\newcommand{\PolySetIn}{\PolySet_{\mathrm{in}}}%
\newcommand{\PolySetOut}{\PolySet_{\mathrm{out}}}%

\subsection{Computing an approximate independent %
   set of polygons}
\seclab{qptas:algorithm}

\paragraph{Algorithm sketch.}
Let $\PolySet$ be the given set of $m$ unweighted polygons, with total
complexity $n$. Assume that the largest independent set $\Opt$ has
size $\nopt$. We need to set $r$ such that
\begin{align*}
    \sqrt{\frac{\log r}{r}} < c' \frac{\eps}{ \log m},
\end{align*}
where $c'$ is some fixed constant which is sufficiently large.
Assuming $m > 1/\eps$, this holds if we set $\displaystyle r = O\pth{
   \pth{\frac{\log m}{\eps} }^2 \log \frac{ \log m}{\eps} }$.  By
\lemref{good:polygon} there exists a polygon $\Polygon$ that can be
encoded by a binary string of length $L = O( \sqrt{r \log r} \log m) =
\poly( \log m, 1/\eps) $.  This polygon has the property that its
boundary intersects at most $n_\Polygon \leq c \frac{\eps}{ \log m}
\nopt$ polygons of $\Opt$, and $\Polygon$ splits $\Opt$ in a balanced
way. In particular, assume for the time being that we knew
$\Polygon$. The algorithm then recurse on $\PolySetIn = \brc{
   \PolygonA \in \PolySet\sep{ \PolygonA \subseteq \Polygon}}$ and
$\PolySetOut = \brc{\PolygonA \in \PolySet\sep{ \PolygonA \cap
      \Polygon = \emptyset}}$.  Since the polygon $\Polygon$ partition
$\Opt$ in a balanced way, this recursion would have depth $H = O( \log
m)$, before the subproblem would be of size $O( \poly( \log m,
1/\eps))$. At this point, the recursion bottoms out, and the algorithm
tries all possibilities, to find the largest independent set.

A recursive instance is defined by the boundary of at most $H$
polygons, each of them can be encoded by a string of length $L$.  As
such, the number of recursive subproblems is $2^{O( L H )} = 2^{\poly(
   \log m, 1/\eps)}$. As such, a dynamic programming algorithm would
work in this case, as in each level of the recursion there are
$2^{O(L)}$ different separating polygons to consider, and in addition,
one can try to solve the given subproblem using brute force, for
subsets of size up to some $O( \poly( \log m, 1/\eps) )$. Returning
the best combined solution found (on the inside and outside
subproblems), among all possibilities tried, results in the desired
approximation algorithm.

As for the quality of approximation, we pick $c$ such that $c\eps/
\log m < \eps /4 H$. Clearly, at each level of the recursion, we lose
$\eps/4 H$ fraction of the optimal solution, and as such, overall, the
solution output has weight at least $(1-H \cdot (\eps/4H)) \nopt \geq
(1-\eps)\nopt$.

\paragraph{Weighted case.}
Observe that we can assume that $m > 1/\eps$ (otherwise, a brute force
algorithm would work). As such, if the maximum weight polygon in the
given instance is $W$, then we can ignore all polygons of weight $\eps
W/4m \leq W/m^2$. In particular, normalizing weights, the weight of
every polygon is going to be an integer in the range $1$ to (say)
$m^3$. Now, the above algorithm would work verbatim, as the depth of
the recursion is going to be $O( \log m^3 ) =O( \log m)$ before the
subproblem weight becomes zero. The only difference is that we add the
weight of a polygon $\PolygonA \in \PolySet$ to the cell in the
cutting that contains its leftmost endpoint when arguing about the
existence of a cheap separating cycle. The rest then go through
without any change.

\bigskip

See \cite{aw-asmwi-13, aw-qmwis-14} for further details. The above
implies the following result.

\begin{theorem}
    Given a set $\PolySet$ of $m$ simple weighted polygons in the
    plane, of total complexity $n$, one can compute an independent set
    of $\PolySet$ of weight $\geq (1-\eps) \wopt$, where $\wopt$ is
    the maximum weight of the optimal independent set. The running
    time of the algorithm is $O\pth{ m^{\poly(\log m, 1/\eps)} +
       m^{O(1)}n }$.
\end{theorem}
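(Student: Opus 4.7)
The plan is to carry out the divide-and-conquer scheme already sketched after the theorem statement, turning it into a dynamic program whose state space is small enough to yield the claimed running time, and to invoke \lemref{good:polygon} as the key structural tool. First I would choose the cutting parameter $r = \Theta\bigl( (\log m/\eps)^2 \log(\log m/\eps) \bigr)$, which is the smallest value that forces the boundary-loss bound $\sqrt{(\log r)/r} = O(\eps/\log m)$ supplied by \lemref{good:polygon}. With this choice the separating polygon $\Polygon$ promised by the lemma can be encoded by a binary string of length $L = \poly(\log m, 1/\eps)$, its boundary crosses only an $O(\eps/\log m)$-fraction of the weight of $\Opt$, and it splits the remaining weight between $\PolySetIn$ and $\PolySetOut$ in a constant-fraction balanced way.

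Next I would set up the recursion. A subproblem is described by the sub-region currently under consideration, which is the nesting of at most $H$ previously chosen separating polygons; the relevant input is the set of polygons of $\PolySet$ that lie in this region. Because each invocation of \lemref{good:polygon} partitions the surviving optimum in a balanced fashion, the recursion depth is $H = O(\log m)$, after which the subproblem has size $\poly(\log m,1/\eps)$ and is solved by brute force. Each subproblem is therefore encoded by at most $H$ separating polygons, each of $L$ bits, giving $2^{O(LH)} = m^{\poly(\log m,1/\eps)}$ distinct subproblems. The dynamic program fills a table over these subproblems; for each it enumerates the $2^{O(L)}$ candidate separating polygons and combines the two previously-computed child optima, taking the best. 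Summing the per-level multiplicative loss of $\eps/(4H)$ over $H$ levels yields a $(1-\eps)$-approximation of $\wopt$.

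For the weighted case I would preprocess as described: without loss of generality $m \geq 1/\eps$, and after discarding polygons whose weight is below an $\eps/(4m)$ fraction of the heaviest one and rescaling, all weights are integers in $[1,m^3]$. This costs only an additional $O(\eps)$ factor and keeps the recursion depth $O(\log m)$. The existence proof of the separating polygon assigns each polygon of $\Opt$ to the cell of the $1/r$-cutting containing its leftmost vertex, which makes the planar cycle separator balance weight rather than cardinality, so \lemref{good:polygon} and the recursion go through verbatim. The additive $m^{O(1)} n$ term in the running time comes from the one-time preprocessing that touches the full input: computing corridor decompositions of subsets of $\PolySet$ by \lemref{corridor:decomp:ext}, and enumerating the $O(m^4)$ canonical spokes used to encode separating edges in \lemref{good:polygon}; after this preprocessing every dynamic-programming operation works on a combinatorial description of size $\poly(m)$ independent of $n$.

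The main obstacle is ensuring that \lemref{good:polygon} can be applied inside each recursive subproblem even though the domain is no longer the whole plane but the intersection of nested previously-chosen separating polygons. This is handled by carrying the enclosing region as part of the subproblem state, restricting attention to polygons of $\PolySet$ contained in it, and observing that the weak cutting and planar separator arguments of \lemref{weak:cutting} and \lemref{good:polygon} are unaffected by clipping to a simply connected region: the corridor decomposition inside such a region still satisfies the Clarkson-Shor hypotheses, and the cycle separator of the clipped planar map still has size $O(\sqrt{\rho})$ and balances the weight of the \emph{current} optimum. Once this is verified, the remaining bookkeeping is routine.
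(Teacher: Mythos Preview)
Your proposal is correct and follows exactly the approach the paper takes: the paper's own proof is a two-sentence pointer back to the algorithm sketch of \secref{qptas:algorithm} (which you have faithfully expanded), plus the observation that enumerating all $O(m^4)$ possible corridors costs $O(m^4 n)$ time, accounting for the additive $m^{O(1)} n$ term. Your discussion of applying \lemref{good:polygon} inside a clipped region is a reasonable elaboration of a point the paper leaves implicit (the frame in the corridor decomposition simply becomes the current region boundary), but otherwise there is no substantive difference.
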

\begin{proof}
    Computing the set of all possible corridors takes $O\pth{ m^4 n }$
    time. The rest now follows the algorithm sketched above and in
    \cite{aw-asmwi-13, aw-qmwis-14}.
\end{proof}

\section{Extension: \QPTAS for sparse properties}
\seclab{extensions}

Let $\PolySet$ be a set of polygons in the plane, no pair of them is
contained inside each other. We are interested in the intersection
graph $\Graph = (\PolySet, \Edges)$ it induces; that is, $\Edges=
\brc{ \Polygon \PolygonA \sep{ \Polygon, \PolygonA \in \PolySet,
      \Polygon \cap \PolygonA \ne \emptyset }}$.  For a subset
$X\subseteq \PolySet$, let $\Graph_X = (X, \Edges_X)$ denote the
\emphi{induced subgraph} of $\Graph$ on $X$; that is, $\Edges_X =
\brc{ \Polygon \PolygonA \sep{ \Polygon, \PolygonA \in X \text{ and }
      \Polygon \PolygonA \in \Edges}}$.  We refer to two subsets $X
\subseteq \PolySet$ and $Y \subseteq \PolySet$ as \emphi{separate}, if
no polygon of $X$ intersects any polygon in $Y$.

Consider a property $\Property$ on graphs (e.g., a graph is
planar). We can naturally define the set system of all subsets of
$\PolySet$ that have this property. That is $\PropertyF{\PolySet} =
(\PolySet,\I)$, where $\I = \brc{ X \subseteq \PolySet \sep{ \Graph_X
      \text{ has property } \Property}}$.

We are interested here in \emphi{hereditary} properties. Specifically,
if $X \in \PropertyF{\PolySet}$ then $Y \in \PropertyF{\PolySet}$, for
all $Y \subseteq X$.  We also require that the property would be
\emphi{mergeable}; that is, for any two separate subsets $X, Y
\subseteq \PolySet$, such that $X, Y \in \PropertyF{\PolySet}$ we have
that $X\cup Y \in \PropertyF{\PolySet}$.  Notice, that the
combinatorial structure $\PropertyF{\PolySet}$ is similar to a
matroid, except that we do not require to have the augmentation
property (this is also known as an independence system).

The \emphi{weight} of $X \in \PropertyF{\PolySet}$ is the total weight
of the polygons of $X$ (or the cardinality of $X$ in the unweighted
case). We are interested in computing (or approximating) the heaviest
set $X \in \PropertyF{\PolySet}$.

As a concrete example, consider the property $\Property$ that a set $X
\subseteq \PolySet $ has no pair of intersecting polygons. Thus,
finding the heaviest set in $\PropertyF{\PolySet}$ that has the
desired property, in this case, corresponds to finding the heaviest
independent set in $\PolySet$.

\begin{defn}
    The property $\PropertyF{\PolySet}$ is \emphi{sparse} if there are
    constants $\delta, c$, such that $\delta > 0$, and for any $X \in
    \PropertyF{\PolySet}$, we have that $\cardin{\EdgesX{\Graph_X}}
    \leq c' \cardin{X}^{2-\delta}$.
\end{defn}

Informally, sparsity implies that the number of pairs of polygons
intersecting each other, in any set $X \in \PropertyF{\PolySet}$, is
strictly subquadratic in the size of $X$.  Surprisingly, for an
intersection graph of curves, where every pair of curves intersects
only a constant number of times, sparsity implies that the number of
edges in the intersection graph is linear \cite{fp-stttr-08}.

\begin{lemma}[\cite{fp-stttr-08}]
    \lemlab{sparse}%
    Let $\PolySet$ be a set of polygons such that the boundaries of
    every pair of polygons has a constant number of intersections.
    Let $\PropertyF{\PolySet}$ be a sparse property. Then, for any $X
    \in \PropertyF{\PolySet}$, we have that
    $\cardin{\EdgesX{\Graph_X}} = O\pth{\cardin{X}}$.
\end{lemma}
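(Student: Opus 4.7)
}
The plan is to combine the hereditary sparsity hypothesis with the Fox--Pach dichotomy theorem for intersection graphs of curves. Write $G_X = (X, E_X)$ for the induced intersection graph. I will show first that $G_X$ cannot contain a large complete bipartite subgraph, and then invoke the Fox--Pach bound for $K_{t,t}$-free intersection graphs of curves with a constant number of pairwise crossings.

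For the first step, suppose $G_X$ contains a biclique $K_{t,t}$, and let $Y \subseteq X$ be the set of $2t$ polygons realizing it. Since $\PropertyF{\PolySet}$ is hereditary, $Y \in \PropertyF{\PolySet}$, and so by the sparsity assumption
\begin{align*}
    t^2 \;\leq\; \cardin{\EdgesX{\Graph_Y}} \;\leq\; c' \cardin{Y}^{2-\delta} \;=\; c' (2t)^{2-\delta}.
\end{align*}
Rearranging gives $t^{\delta} \leq c' 2^{2-\delta}$, so $t$ is bounded by a constant $t_0 = t_0(c',\delta)$ that depends only on the sparsity parameters. Hence $G_X$ is $K_{t_0+1,t_0+1}$-free.

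For the second step, the assumption that any two polygon boundaries cross a constant number of times means that $G_X$ is the intersection graph of a family of curves (the polygon boundaries, plus containment relations, which only add a constant factor) with a bounded number of crossings per pair. The theorem of Fox and Pach \cite{fp-stttr-08} states precisely that such an intersection graph, if it is $K_{s,s}$-free for a fixed $s$, must have a linear number of edges. Applying this with $s = t_0+1$ yields $\cardin{\EdgesX{\Graph_X}} = O(\cardin{X})$, where the hidden constant depends on $t_0$, $\delta$, $c'$, and the fixed bound on pairwise crossings.

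The only nontrivial ingredient is the black-box application of the Fox--Pach theorem; the hereditary-plus-sparsity argument that rules out large bicliques is a short counting step. Thus the main obstacle is conceptual (knowing and citing the right theorem from \cite{fp-stttr-08}) rather than computational, and no further work is needed beyond what is sketched above.
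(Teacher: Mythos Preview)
Your proposal is correct but follows a different route from the paper. The paper's proof sketches the Fox--Pach separator argument directly: since each intersecting pair of polygons contributes $O(1)$ crossings, the arrangement of the $m=\cardin{X}$ boundary curves has $m'=O(m^{2-\delta})$ vertices; the planar separator theorem then gives a set of $O(\sqrt{m'})=O(m^{1-\delta/2})$ vertices whose removal splits the curves into two balanced families, and recursing yields $T(m)=O(m^{1-\delta/2})+T(m_1)+T(m_2)$ with $m_i\le (2/3)m$, whose solution is $T(m)=O(m)$. You instead first exploit the hereditary hypothesis (a standing assumption in the section, though not restated in the lemma) to rule out $K_{t_0,t_0}$ via a one-line counting argument, and then invoke the Tur\'an-type theorem of Fox and Pach as a black box. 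Your route is cleaner in that it isolates the single combinatorial consequence of sparsity that is actually needed (no large biclique) and then cites the packaged result; the paper's route is more self-contained, needing only the planar separator theorem, but the recursion as written is only a sketch and suppresses the bookkeeping for curve pieces created by the cuts. Both ultimately rest on \cite{fp-stttr-08}, so the difference is presentational rather than mathematical.
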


\begin{proof}
    This result is known \cite{fp-stttr-08}. We include a sketch of
    the proof  here for the sake of completeness.
    
    We think about the boundaries of the polygons of $\PolySet$ as
    curves in the plane, and let $m = \cardin{X}$. The intersection
    graph $\Graph_X$ has subquadratic numbers of edges, and as such,
    the arrangement of the curves of $X$ has at most $m' =
    O(m^{2-\delta})$ vertices. By the planar separator theorem, there
    is a set of $O\pth{\sqrt{m'}} = O(m^{1-\delta/2 })$ vertices, that
    their removal disconnects this arrangement into a set of $m_1,
    m_2$ curves, where $m_1, m_2 \leq (2/3)m$ and $m_1 + m_2 \leq m +
    O\pth{\sqrt{m'}}$ (here we break the curves passing through a
    vertex of the separator into two curves, sent to the respective
    subproblems). Applying the argument now to both sets recursively,
    we get that the total number of vertices is $T(m) = O(
    m^{1-\delta/2}) + T(m_1) + T(m_2)$, and the solution of this
    recurrence is $T(m) = O(m)$.
\end{proof}

 A property $\PropertyF{\PolySet}$ is
\emphi{exponential time checkable}, if for any subset $X \subseteq
\Vertices$, one can decide if $X \in \PropertyF{\PolySet}$ in time
$2^{\cardin{X}^{O(1)}}$.

\begin{theorem}
    \thmlab{main:2}%
    Let $\PolySet$ be a weighted set of $m$ polygons in the plane, no
    pair of them is contained inside each other, of total complexity
    $n$, such that the boundaries of every pair of them intersects
    only a constant number of times. Let $\PropertyF{\PolySet}$ be a
    hereditary, sparse and mergeable property that is exponential time
    checkable.

    Then, for a parameter $\eps > 0$, one can compute in
    quasi-polynomial time (i.e., $2^{O(\poly( \log m, 1/\eps))}$) a
    subset $X \subseteq \PolySet$, such that $X \in
    \PropertyF{\PolySet}$, and $\WeightX{X} \geq (1-\eps)\wopt$, where
    $\wopt$ is the maximum weight of a set in $\PropertyF{\PolySet}$.
\end{theorem}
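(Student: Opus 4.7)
The plan is to follow the \QPTAS framework of \secref{qptas}, replacing ``heaviest independent set'' by ``heaviest subset satisfying property $\Property$''. Three of the hypotheses in the theorem are exactly what is needed to make the framework go through: the property being hereditary guarantees that for any simple closed curve $\Polygon$, both $\Opt\cap \PolySetIn$ and $\Opt\cap \PolySetOut$ remain in $\PropertyF{\PolySet}$, so restricting to subproblems preserves feasibility; being mergeable guarantees that near-optimal valid subsets computed recursively on $\PolySetIn$ and on $\PolySetOut$ can be glued together, since polygons strictly inside $\Polygon$ and strictly outside $\Polygon$ are separate by definition; and being exponential time checkable handles the leaves of the recursion, where a brute force enumeration over all $2^{\poly(\log m,1/\eps)}$ subsets of a base subproblem of size $\poly(\log m,1/\eps)$ runs in $2^{\poly(\log m,1/\eps)}$ time per subset.

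The only genuinely new work is in extending the structural lemma \lemref{good:polygon}, whose proof used disjointness of $\Opt$ crucially. I would replace disjointness by sparsity: by \lemref{sparse} we have $\cardin{\EdgesX{\Graph_{\Opt}}}=O(\nopt)$, so the arrangement $\ArrX{\Opt}$ has only $O(\nopt)$ vertices and, via \lemref{corridor:decomp:ext}, every $\Sample\subseteq\Opt$ satisfies $\cardin{\CD{\Sample}}=O(\cardin{\Sample})$, a slowly growing function. The general form of \lemref{weak:cutting} therefore applies to $\Opt$ itself, yielding a $1/r$-cutting with $O(r\log r)$ corridors, each of whose conflict list has weight at most $\wopt/r$. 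Applying the weighted planar cycle separator to this cutting (charging each polygon of $\Opt$ to the corridor containing its leftmost vertex) produces a closed polygonal curve $\Polygon$ built from $O(\sqrt{r\log r})$ corridor edges that balance-partitions $\Opt$. Each such edge is a spoke or a subchain of the boundary of some polygon of $\Opt$, and in either case, any polygon of $\Opt$ crossing the edge enters the interior of one of its two adjacent corridors, hence belongs to the conflict list of that corridor. Summing the conflict-list bound over the separator edges gives a total crossed weight of
\begin{align*}
  O\pth{\sqrt{r\log r}\cdot\frac{\wopt}{r}} &= O\pth{\wopt\sqrt{\frac{\log r}{r}}}.
\end{align*}
The encoding argument of \lemref{good:polygon} then carries over verbatim, using that there are only $O(m^4)$ candidate spokes and $O(m^2)$ candidate arrangement vertices of $\PolySet$ (by the bounded pairwise intersection hypothesis), so $O(\sqrt{r\log r}\log m)$ bits suffice to specify $\Polygon$.

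Given this structural lemma the algorithm mirrors the one in \secref{qptas:algorithm}: set $r = O((\log m/\eps)^2\log(\log m/\eps))$ so that the per-level loss is $O(\eps/\log m)\wopt$; enumerate all $2^{\poly(\log m,1/\eps)}$ candidate separators at each of the $O(\log m)$ recursion levels; memoize the $2^{\poly(\log m,1/\eps)}$ distinct subproblems in a dynamic program; and at the leaves brute-force the heaviest $\Property$-satisfying subset via the exponential time checkability hypothesis. Mergeability justifies combining the inside and outside partial solutions, and the cumulative loss is $\eps\wopt$. The main obstacle---the place where the sparse setting genuinely departs from the independent-set argument---is controlling the weight lost across subchain edges of the separator in the absence of disjointness; this is resolved precisely by the combination of sparsity (which, via \lemref{sparse}, linearizes the edge count of $\Graph_{\Opt}$ and makes $\cFunc{\cdot}$ slowly growing) with the conflict-list structure of the weak cutting absorbing any such crossings.
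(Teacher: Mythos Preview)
Your proposal is correct and follows essentially the same route as the paper: use sparsity together with \lemref{sparse} to get that $\cFunc{|\Sample|}=O(|\Sample|)$ for every $\Sample\subseteq\Opt$, apply the weak cutting of \lemref{weak:cutting} to $\Opt$, run the weighted planar cycle separator on the resulting $O(r\log r)$-cell map, bound the lost weight by summing conflict-list weights over the $O(\sqrt{r\log r})$ separator edges, and then plug into the dynamic program of \secref{qptas:algorithm} with hereditary/mergeable/checkable playing exactly the roles you identify. The only cosmetic difference is that the paper crudely bounds $\sqrt{(\log r)/r}$ by $r^{-1/3}$ and therefore takes $r=\Omega((\log m/\eps)^3)$, whereas you keep the sharper estimate and the smaller $r$ from the independent-set case; both choices sit inside $\poly(\log m,1/\eps)$.
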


\begin{proof}
    One need to verify that the algorithm of \secref{qptas} works also
    in this case. As before, we are going to argue that there exists a
    cheap separating polygon.

    So, let $\Opt \in \PropertyF{\PolySet}$ be the heaviest set.  By
    \lemref{sparse}, for any subset $\Sample \subseteq \PolySet$ of
    size $O( r \log r)$, the arrangement $\ArrX{\Sample}$ has $O\pth{
       {r \log r} }$ vertices, and as such $\CD{\Sample}$ form a
    planar graph with, say, $O\pth{ r \log r }$ vertices, edges and
    faces.  In particular, \lemref{weak:cutting} implies that we can
    get a $1/r$-cutting with this number of cells, and let $\CDC$ be
    this cutting. Distributing the polygons of $\Opt$ to the cells of
    $\CDC$, and finding a balanced weight separator, as done in
    \lemref{good:polygon}, with $L = O \pth{ \sqrt{ r \log r }}$
    edges, results in a separating polygon of weight
    \begin{align*}
        O\pth{ \frac{ \wopt}{r} \sqrt{r \log r}}%
        =%
        O\pth{ \frac{\wopt}{r^{1/3}}}%
        \leq%
        \frac{\eps}{c \log m} \wopt,
    \end{align*}
    for $r = \Omega\pth{ \pth{\log m /\eps}^{3}}$. It is easy to
    verify that such a polygon can be encoded using $O( L \log m) = O(
    \poly( \log m, \linebreak[0] 1/\eps))$ bits (vertices used by the
    cycle are either vertices rising out of intersection of polygons,
    and there are $O(m^2)$ such vertices, or medial axis
    vertices). The rest of the algorithm now works as described in
    \secref{qptas:algorithm}. Note, that because of the mergeablity
    assumption, we need to verify that the generated sets have the
    desired property only in the bottom of the recursions. But such
    subsets have size $O( \poly( \log m, 1/\eps ))$, and thus they can
    be checked in $2^{O( \poly( \log m, 1/\eps ))}$ time, by the
    exponential time checkablity assumption.
\end{proof}

Notice, that without the assumption that no pair of input polygons is
contained inside each other, we have to deal with the non-trivial
technicality that the separating cycle might be fully contained inside
some input polygon.

Properties that comply with our conditions, and thus one can now
\thmref{main:2} to get a \QPTAS for the largest subset $\Opt$ of
$\PolySet$ that have this property, include the following:
\begin{compactenum}[\qquad(A)]
    \item All the polygons of $\Opt$ are independent.

    \item The intersection graph of $\Opt$ is planar, or has low
    genus.

    \item The intersection graph of $\Opt$ does not contain $K_{s,t}$
    as a subgraph, for $s$ and $t$ constants.

    \item If the boundaries of every pair of polygons of $\PolySet$
    intersects at most twice, then they behave like pseudo-disks. In
    particular, the union complexity of $m$ pseudo-disks is linear,
    and the by the Clarkson-Shor technique, the complexity of the
    arrangement of depth $k$ of $m$ pseudo-disks is $O(km)$. This
    implies that if $\Opt$ is a set pseudo-disks with bounded depth,
    then the intersection graph has only $O( \cardin{\Opt})$ edges,
    and as such this is a sparse property, and it follows that one can
    $(1-\eps)$-approximate (in quasi-polynomial time) the heaviest
    subset of pseudo-disks where the maximum depth is
    bounded. Previously, only a constant approximation was known
    \cite{ehr-gpnuc-12}.
\end{compactenum}

\section{Conclusions}
\seclab{conclusions}

We extended the \QPTAS of Adamaszek and Wiese \cite{aw-asmwi-13,
   aw-qmwis-14} for polygons of arbitrary complexity, in the process
showing a new interesting case where the Clarkson-Shor technique
holds. We also showed that the framework of Adamaszek and Wiese
\cite{aw-asmwi-13, aw-qmwis-14} applies not only for the problem of
computing the heaviest independent set, but also computing the
heaviest subset that has certain sparsity conditions (e.g., the
intersection graph of the subset of polygons is a planar graph, or
does not contain $K_{2,2}$, etc).

The most interesting open problem is trying to get a \PTAS for these
problems. For example, Adamaszek and Wiese \cite{aw-asmwi-13}, show a
\PTAS for the case of ``large'' axis-parallel rectangles. Along these
lines, the existence of a separator for sparse intersection graphs
\cite{fp-stttr-08, fp-stsga-10, fp-ansts-13}, suggest that potentially
in some (unweighted) cases one should be able to use a local search
strategy, as was done by Chan and Har-Peled \cite{ch-aamis-12}. The
technical problem is that Chan and Har-Peled applies the separator to
the intersection graph that contains the optimal $\Opt$ and local
$\Local$ solutions together. Of course, $\Graph_\Opt$ and
$\Graph_\Local$ being sparse, in no way guarantees that $\Graph_{\Opt
   \cup \Local}$ is sparse. The only case where this sparsity still
holds is for the case of searching for the largest subset of
pseudo-disks such that their maximum depth is bounded by a
constant. In particular, we conjecture that one gets a \PTAS in this
case via local search.


\section*{Acknowledgments}

The author would like to thank Anna Adamaszek, Chandra Chekuri, J{\'
   a}nos Pach, and Andreas Wiese for useful discussions on the
problems studied in this paper.

 
\bibliographystyle{alpha} 
\bibliography{corridors}

\end{document}